\documentclass{article}
\usepackage{graphicx} 

\pagestyle{plain}
\usepackage[utf8]{inputenc}
\usepackage{etex}
\usepackage{graphicx}
\usepackage{amscd}
\usepackage{amsfonts}
\usepackage{amssymb}
\usepackage{amsmath}
\usepackage[amsmath,amsthm,thmmarks]{ntheorem}
\usepackage{mathtools}
\usepackage{curves}
\usepackage{eucal}
\usepackage{epsfig}
\usepackage{enumerate}
\usepackage{extarrows}
\usepackage{fullpage}
\usepackage{hyperref}
\usepackage[utf8]{inputenc}
\usepackage{ifthen}
\usepackage{latexsym}
\usepackage[usenames,dvipsnames]{pstricks}
\usepackage{pst-grad} 
\usepackage{pst-plot} 
\usepackage[all]{xy}
\usepackage{tikz}
\usepackage{verbatim}
\usepackage{xargs}

\newcounter{comments}
\setcounter{comments}{0} 
\usepackage{comment}


 \newcommand{\new}[2][]{#2}


\DeclareMathOperator{\vecx}{\mathbf{x}}
\DeclareMathOperator{\FF}{\mathbb{F}}

\DeclareFontFamily{OT1}{rsfs}{}
\DeclareFontShape{OT1}{rsfs}{n}{it}{<-> rsfs10}{}
\DeclareMathAlphabet{\mathscr}{OT1}{rsfs}{n}{it}




\newtheorem{ex}{Exercise}

\newtheorem{dfn}{Definition}[section]

\newtheorem{cor}[dfn]{Corollary}
\newtheorem{eg}[dfn]{Example}

\newtheorem{lem}[dfn]{Lemma}

\newtheorem{remark}[dfn]{Remark}

\newtheorem{thm}[dfn]{Theorem}
\newlength{\remaining}

\newcommand{\ee}{\end{equation}}
\newcommand{\eea}{\end{eqnarray}}
\newcommand{\eean}{\end{eqnarray*}}
\newcommand{\een}{\end{equation*}}
\newcommand{\epm}{\end{pmatrix}}

\newcommand{\x}{\xymatrix@1@=50pt@M=4pt@L=3pt}

\providecommand{\keywords}[1]
{
  \small	
  \textbf{\textit{Keywords---}} #1
}

\title{\new{Modular control of Boolean network models}}

\author{David Murrugarra$^1$ \and Alan Veliz-Cuba$^2$ \and Elena Dimitrova$^3$ \and Claus Kadelka$^4$ \and Matthew Wheeler$^5$ \and Reinhard Laubenbacher$^5$}

\begin{document}

\maketitle
{\footnotesize
     \centerline{$^1$Department of Mathematics,
      University of Kentucky, Lexington, KY 40506, USA}
}
{\footnotesize
     \centerline{$^2$Department of Mathematics,
      University of Dayton, Dayton, OH 45469, USA}
}
{\footnotesize
     \centerline{$^3$Mathematics Department,
      California Polytechnic State University, San Luis Obispo, CA 93407, USA}
}
{\footnotesize
     \centerline{$^4$Department of Mathematics,
      Iowa State University, Ames, IA 50011, USA}
}
{\footnotesize
     \centerline{$^5$Department of Medicine,
      University of Florida, Gainesville, FL 32610, USA}
}

\begin{abstract}
\new{The concept of control is crucial for effectively understanding
and applying biological network models. Key structural features
relate to control functions through gene regulation, signaling, or metabolic
mechanisms, and computational models need to encode these. Applications
often focus on model-based control, such as in biomedicine or metabolic engineering.
In a recent paper, the authors developed a theoretical framework
of modularity in Boolean networks, which lead to a canonical semidirect
product decomposition of these systems. In this paper, we present an approach
to model-based control that exploits this modular structure, as well
as the canalizing features of the regulatory mechanisms. We show how to
identify control strategies from the individual modules, and we present a criterion
based on canalizing features of the regulatory rules to identify modules
that do not contribute to network control and can be excluded. For even
moderately sized networks, finding global control inputs is computationally
challenging. Our modular approach leads to an efficient approach to solving
this problem. We apply it to a published Boolean network model of blood
cancer large granular lymphocyte (T-LGL) leukemia to identify a minimal
control set that achieves a desired control objective.}
\end{abstract}

\keywords{Boolean networks, modularity, control, canalization, gene regulatory networks.}

\section{Introduction}
With the availability of more experimental data and information about the structure of biological networks, computational modeling can capture increasingly complex features of biological networks~\cite{aguilar2020generalizable,plaugher2021modeling}. However, the increased size and complexity of dynamic network models also poses challenges in understanding and applying their structure as a tool for model-based control, important for a range of applications~\cite{rozum2022leveraging,plaugher2023phenotype}.
This is our focus here. To narrow the scope of the problems we address we limit ourselves to intracellular networks represented by Boolean network (BN) 
models. BNs are widely used in molecular systems biology to capture the coarse-grained dynamics of a variety of regulatory networks~\cite{schwab2020concepts}. They have been shown to provide a good approximation of the dynamics of continuous processes~\cite{veliz2012relationship}. 

For the commonly-used modeling framework of ordinary differential equations, there is a well-developed theory of optimal control, which is largely absent \new{from} other modeling frameworks, such as Boolean networks or agent-based models, both frequently used in systems biology and biomedicine. Furthermore, control inputs, in many cases, are of a binary nature, such as gene knockouts or the blocking of mechanisms. \new[For BNs, there is no readily available mathematical theory that could be used for control, leaving sampling and simulation.]{Existing Boolean network control methods do not scale well.} As networks get larger, with hundreds \cite{singh2023large} 
or even thousands of nodes~\cite{aghamiri2020automated}, \new[this leaves]{only} few computational tools \new[to]{can} identify control inputs for achieving preselected objectives, such as moving a network from one phenotype (e.g., cancer) to another (e.g., normal). One approach is to reduce the system in a way that the reduced system maintains relevant dynamical properties such as its  attractors~\cite{Veliz2011,Saadatpour2013}. This allows the control methods to be applied to the reduced system, and the same controls can then be used for the original system. 

Control targets in Boolean networks have been identified by a variety of approaches: using stable motifs~\cite{zanudo2015cell,rozum2022pystablemotifs}, feedback vertex sets~\cite{zanudo2017structure,mochizuki2013dynamics}, \new{trap spaces~\cite{cifuentes2020control,trinh2022computing}}, model checking~\cite{cifuentes2022control}, and other methods~\cite{kaminski2013minimal,samaga2010computing}. A few \new{further} approaches have \new{explicitly} used strongly connected components \new{(SCCs)} for model analysis by decomposing the wiring diagram or the state space of the network~\cite{wu2010comprehensive,jarrah2010dynamics}. \new{For example, stable motifs describe a set of network nodes and their corresponding states which are such that the nodes form a minimal SCC (e.g. a feedback loop) and their states form a partial fixed point of the Boolean network~\cite{zanudo2013effective,zanudo2015cell}. An identification of the stable motifs yields not only an efficient way to derive the network attractors but also control targets. Trap spaces, defined in~\cite{klarner2015computing} and related to stable motifs, are subspaces of the entire state space of a Boolean network that the dynamics cannot escape. Trap spaces, which can often be efficiently computed for biological Boolean models, have recently been used to simplify the identification of both network attractors and controls~\cite{cifuentes2020control,trinh2022computing}.} However, to our knowledge, none of the approaches developed thus far exploit the modular structure exhibited by many biological systems, in order to identify control strategies by focusing on one module at a time, which is the approach used in this paper. \new{This approach can substantially simplify the problem of identifying suitable controls, specifically in networks that consist of multiple decently-sized modules. For example, we showed in~\cite{kadelka2023modularity} that our modular  approach can efficiently solve the control identification problem for a 69-node pancreatic cancer model that consists of three modules of size greater than one, plus a number of single-node modules. Some of the existing control methods can often not handle reasonably large models of e.g. 70 nodes, as has been shown in~\cite{plaugher2023phenotype,borriello2021basis}. It is important to note, however, that our control approach does not guarantee finding a control of minimal size or all possible controls that achieve a given objective.}

Modularity refers to the division of the system into separate units, or modules, that each have a specific function~\cite{kashtan2005spontaneous, hartwell1999molecular}. 
Modularity is a fundamental property of biological systems that is essential for the evolution of new functions and the development of robustness~\cite{kitano2004biological,lorenz2011emergence}. In \cite{kadelka2023modularity}, we developed a mathematical theory of modularity for Boolean network models and showed that one can identify network-level control inputs at the modular level. That is, we obtain global control inputs by identifying them at the local, modular level and assembling them to global control. This enables network control for much larger networks than would otherwise be computationally unfeasible. 
It is worth noting that, although the ideas and concepts behind our approach to modularity are intuitive and natural, to our knowledge, \new[there is no well-developed]{we developed the first rigorous} mathematical theory of modular decomposition of Boolean networks\new[ and the use of modularity for the purpose of control. In this paper, we further develop this approach into a mathematical theory of biological network control. ]{, which we expand to questions related to the identification of biological network controls in this paper.}

We further propose to use another property of biological networks, represented through Boolean network models. Almost all Boolean rules that describe the dynamics of over 120 published, expert-curated biological Boolean network models have the property that they exhibit some degree of canalization~\cite{kadelka2024meta}. A Boolean function is canalizing if it has one or more variables that, when they take on a particular value, they determine the value of the function, irrespective of the values of all the other variables. As an example, any variable in a conjunctive rule (e.g., \new{$x \wedge y \wedge z$}) determines the value of the entire rule, when it takes on the value 0. We derive a criterion for Boolean network models whose Boolean functions are all canalizing, that can be used to exclude certain modules from needing to be considered for the identification of controls. 

Our approach to control via modularity is summarized in Figure~\ref{fig:control_with_modularity}. We decompose the network into its constituent modules, then apply control methods to each module to identify a control target for the entire network. We show that by combining the controls of the modules, we can control the entire network. In the last part of the paper, we present theoretical results that exploit the canalizing properties of the regulatory functions to exclude certain modules from the control search. Finally, we demonstrate our approach by applying it to a published model of the blood cancer large granular lymphocyte (T-LGL) leukemia~\cite{Saadatpour:2011aa}.

\begin{figure}[t]
  \centering
   \includegraphics[width=\textwidth]{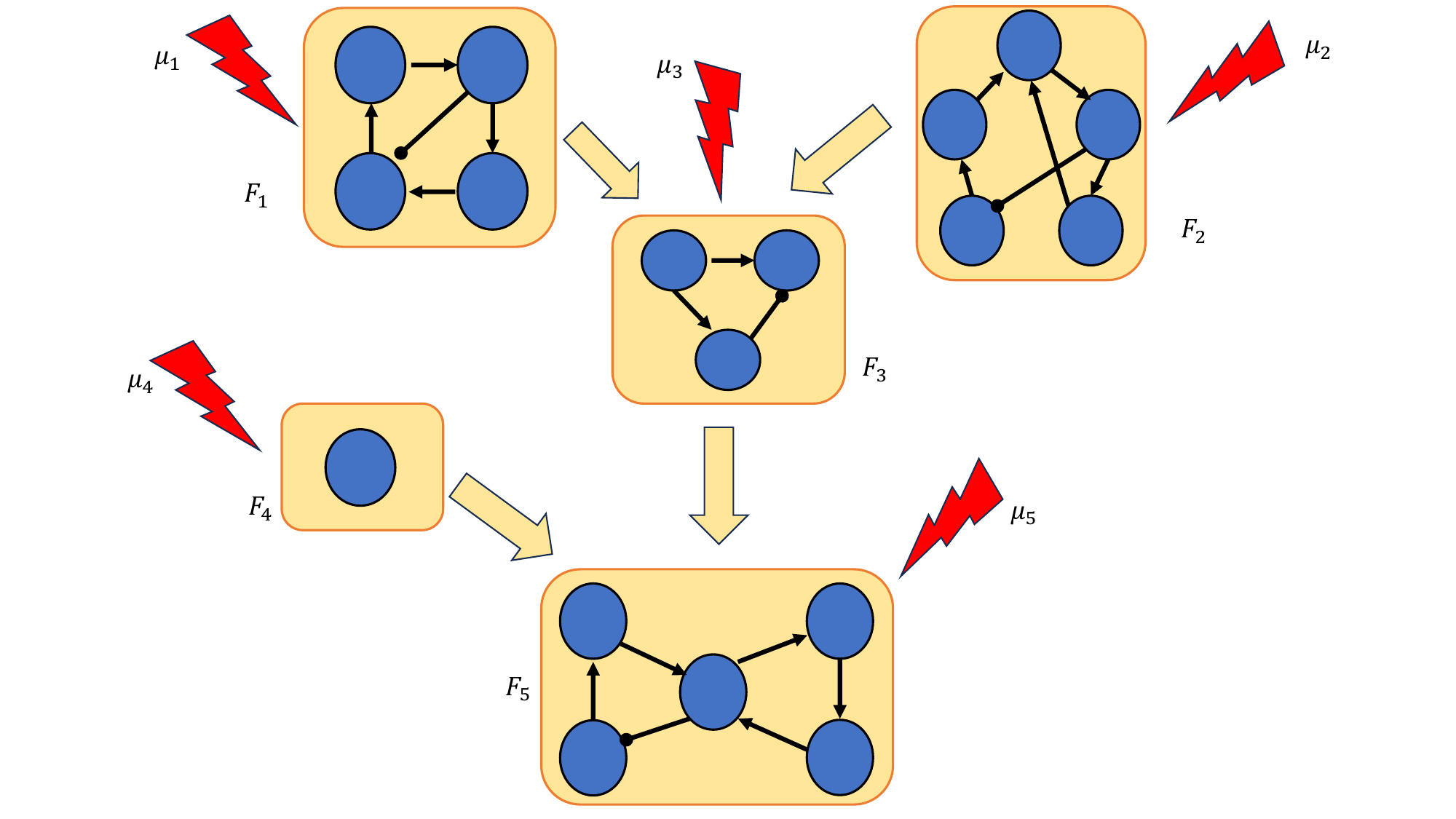}
   \caption{Control via modularity. First, the network is decomposed into its constituent modules: $F_1, \dots, F_m$. Then, controls $\mu_1, \dots, \mu_m$ are identified for each module. Combining the controls of the modules $\mu = (\mu_1,\ldots,\mu_m)$ yields a set of controls for the whole network.}
   \label{fig:control_with_modularity}
\end{figure}

\section{Background}

We first describe Boolean networks and how to decompose a network into modules.
In a BN, each gene is represented by a node that can be in one of two states: ON or OFF. Time is discretized as well, and the state of a gene at the next time step is determined by a Boolean function that takes as input the current states of a subset of the nodes in the BN. The dependence of a gene on the state of another gene can be graphically represented by a directed edge, and the \emph{wiring diagram} contains all such dependencies. 

\subsection{Boolean Networks}
Boolean networks can be seen as discrete dynamical systems. 
Specifically, consider $n$ variables $x_1,\dots,x_n$ each of which can take values in $\FF_2:=\{0,1\}$, where $\FF_2$ is the field with two elements, 0 and 1, where arithmetic is performed modulo 2. 
Then, a synchronously updated Boolean network is a function
$F=(f_1,\ldots,f_n):\FF_2^n\to\FF_2^n$, where each coordinate function $f_i$ describes how the future value of variable $x_i$ depends on the present values of all variables. All variables are updated at the same time (synchronously). 

\begin{dfn}\label{def_wiring_diagram}
The \emph{wiring diagram} of a Boolean network $F=(f_1,\ldots,f_n): \FF_2^n\rightarrow \FF_2^n$ is the directed graph with vertices $x_1,\ldots,x_n$ and an edge from $x_i$ to $x_j$ if $f_j$ depends on $x_i$. That is, if there exists $\mathbf x \in \FF_2^n$ such that 
$f_j(\mathbf x)\neq f_j(\bf x + e_i),$ where $\bf e_i$ is the $i$th unit vector.
\end{dfn}

\begin{dfn}\label{def_strongly_connected}
The wiring diagram of a Boolean network is \emph{strongly connected} if every pair of nodes is connected by a directed path. That is, for each pair of nodes $x_i,x_j$ in the wiring diagram  with $x_i\neq x_j$ there exists a directed path from $x_i$ to $x_j$ (and vice versa). In particular, a one-node wiring diagram is strongly connected by definition.
\end{dfn}

\begin{remark}
The wiring diagram of any Boolean network is either strongly connected or it is composed of a collection of strongly connected components where connections between different components 
move in only one direction.  
\end{remark}

\begin{eg}\label{eg:wd}
Figure~\ref{fig:dynamics_eg}a shows the wiring diagram of the Boolean network $F:\FF_2^3\rightarrow \FF_2^3$ given by $$F(x_1,x_2,x_3)=(x_2 \wedge \neg x_3,x_3,\neg x_1 \wedge x_2).$$
\end{eg}

\begin{figure}
    \centering
    \includegraphics[width=\textwidth]{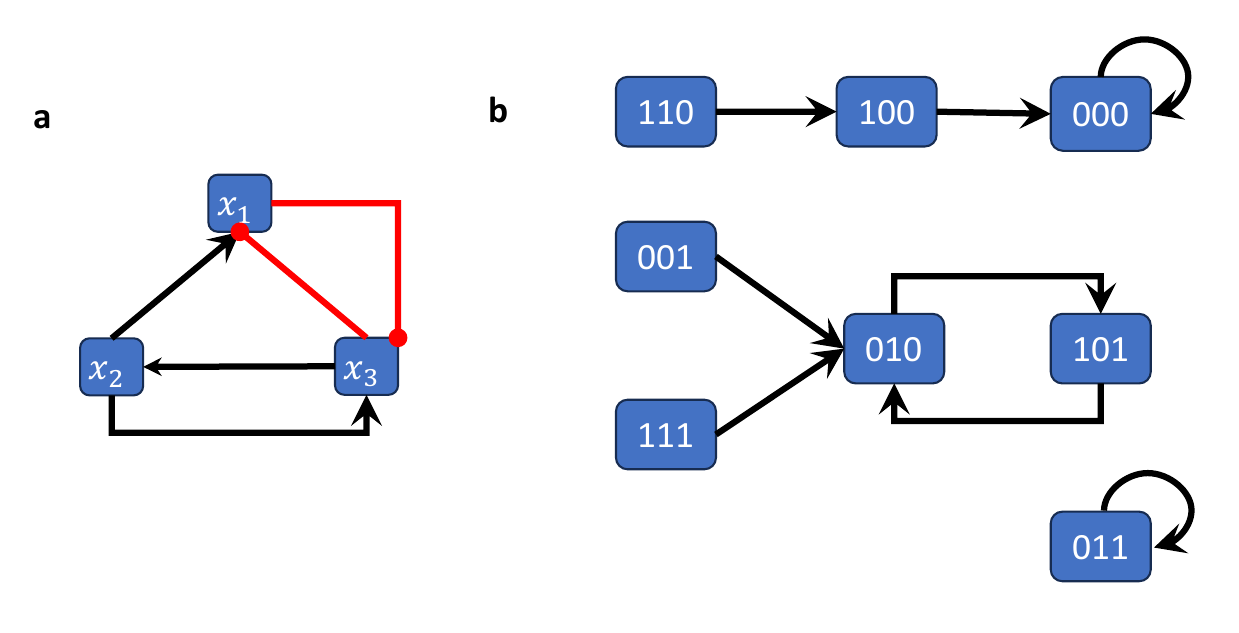}
    \caption{Wiring diagram and state space of the Boolean network in Example~\ref{eg:wd}-\ref{eg:cycle_structure}. 
    (a) The wiring diagram encodes the dependency between variables.
    (b) The state space is a directed graph with edges between all states and their images. This graph therefore encodes all possible trajectories. 
    }
    \label{fig:dynamics_eg}
\end{figure} 

\subsection{Dynamics of Boolean networks}
Another directed graph associated with a BN is the \emph{state transition graph}, also referred to as the \emph{state space}. It describes all possible transition\new{s} of the BN from one time step to \new{the next}. The \emph{attractors} of a BN are minimal sets of states from which there is no escape as the system evolves. An attractor with a single state is also called a \emph{steady state} (or fixed point). In mathematical models of intracellular regulatory networks, the attractors of the model are often associated with the possible phenotypes of the cell. This idea can be traced back to Waddington~\cite{waddington2014strategy} and Kauffman~\cite{Kauffman:1969aa}. For example, in a model of cancer cells, the steady states of the model correspond to proliferative, apoptotic, or growth-arrest phenotypes~\cite{plaugher2022uncovering}. 
Mathematically, a phenotype is associated with a group of attractors where a subset of the system’s variables have the same states. 
These shared states are then used as biomarkers that indicate diverse hallmarks of the system.

There are two ways to describe the dynamics of a Boolean network $F: \FF_2^n\to \FF_2^n$, (i) as trajectories for all $2^n$ possible  initial conditions, or (ii) as a directed graph with nodes in $\FF_2^n$. Although the first description is less compact, it will allow us to formalize the dynamics of coupled networks. 

\begin{dfn}
A trajectory of a Boolean network $F:\FF_2^n\rightarrow \FF_2^n$ is a sequence $(x(t))_{t=0}^\infty$ of elements of $\FF_2^n$ such that $x(t+1)=F(x(t))$ for all $t\geq 0$.
\end{dfn}

\begin{eg}\label{eg:traj}
For the network in the example above, $F(x_1,x_2,x_3)=(x_2 \wedge \neg x_3,x_3,\neg x_1 \wedge x_2)$, there are $2^3=8$ possible initial states giving rise to the following trajectories (commas and parenthesis for states are omitted for brevity).
\begin{align*}
     T_1&=(000,000,000,000,\ldots)\\
     T_2&=(001,010,101,010,\ldots)\\
    T_3&=(010,101,010,101,\ldots)\\
    T_4&=(011,011,011,011,\ldots)\\
    T_5&=(100,000,000,000,\ldots)\\
    T_6&=(101,010,101,010,\ldots)\\
    T_7&=(110,100,000,000,\ldots)\\
    T_8&=(111,010,101,010,\ldots)
\end{align*}
We can see that $T_3$ and $T_6$ are periodic trajectories with period 2. Similarly, $T_1$ and $T_4$ are periodic with period 1. All other trajectories eventually reach one of these 4 states.
\end{eg}

When seen as trajectories, $T_3$ and $T_6$ are different, but they can both be encoded by the fact that $F(0,1,0)=(1,0,1)$ and $F(1,0,1)=(0,1,0)$. Similarly, $T_1$ and $T_4$ can be encoded by the equalities $F(0,1,1)=(0,1,1)$ and $F(0,0,0)=(0,0,0)$. This alternative, more compact way of encoding the dynamics of a Boolean network is the standard approach, which we formalize next. 

\begin{dfn}
The \emph{state space} of a (synchronously updated) Boolean network $F:\FF_2^n\to \FF_2^n$ 
is a directed graph with vertices in $\FF_2^n$ and an edge from $x$ to $y$ if $F(x)=y$.
\end{dfn}

\begin{eg}\label{eg:state_space}
Figure~\ref{fig:dynamics_eg}b shows the state space of the (synchronously updated) Boolean network from Example~\ref{eg:wd}.
\end{eg} 

From the state space, one can easily obtain all periodic points, which form the attractors of the network. 

\begin{dfn}\label{def_attractors}
The \emph{\new{set} of attractors} for a Boolean network is the set $\mathcal{A}(F)$ of all \emph{minimal} \new{nonempty} subsets $\mathcal{C}\subseteq \FF_2^n$ satisfying $F(\mathcal{C})=\mathcal{C}$.  
\begin{enumerate}
    \item The subset $\mathcal{A}^1(F)\subset\mathcal{A}(F)$ of sets of exact size 1 consists of all \emph{steady states} (also known as \emph{fixed points}) of $F$.
    \item The subset $\mathcal{A}^r(F)\subset\mathcal{A}(F)$ of sets of exact size $r$ consists of all \emph{cycles} of length $r$ of $F$.
\end{enumerate}

Equivalently, an \emph{attractor of length $r$} is an ordered set with $r$ elements, $\mathcal{C}=\{c_1,\ldots,c_r\}$, such that $F(c_1)=c_2, F(c_2)=c_3,\ldots, F(c_{r-1})=c_r, F(c_r)=c_1$.
\end{dfn}

\begin{remark}
In the case of steady states, the attractor $\mathcal{C}=\{c\}$ may be denoted simply by $c$.
\end{remark}

\begin{eg}\label{eg:cycle_structure}
The Boolean network from Example~\ref{eg:wd} 
has 2 steady states (i.e., attractors of length 1) and one cycle of length 2, which can be easily derived from its state space representation (Figure~\ref{fig:dynamics_eg}b).
\end{eg}

\subsection{Modules}\label{subsec:simple_networks}

In~\cite{kadelka2023modularity}, a concept of modularity was introduced for Boolean networks. The decomposition into modules occurs on structural (wiring diagram) level but induces an analogous decomposition of the network dynamics, in the sense that one can recover the dynamics of the entire network from the dynamics of the modules. 
For this decomposition, a \textit{module} of a BN is defined as a subnetwork that is itself a BN with external parameters
in the subset of variables that specifies a strongly connected component (SCC) in the wiring diagram (see Example~\ref{eg:modules_eg}).
More precisely, for a Boolean network $F$ and  subset of its variables $S$, we define 
the \textit{restriction} of $F$ to $S$ to be the BN $F|_S = (f_{k_1},\new{\ldots},f_{k_m})$, where $x_{k_i}\in S$ for $i=1,\new{\ldots},m$. We note that $f_{k_i}$ might contain inputs that are not part of $S$ (e.g., when $x_{k_i}$ is regulated by some variables that are not in $S$). Therefore, the BN $F|_S$ may contain external parameters (which are themselves fixed and do not possess an update rule). Given a $F$ with wiring diagram $W$, let $W_1,\ldots ,W_m$ be the SCCs of $W$ with pairwise disjoint sets of variables $S_i$.
The \textit{{modules}} of $F$ are then the restrictions to these sets of variables, $F|_{S_i}$. Further, the modular structure of $F$ can be described by a directed acyclic graph \new{$Q =(V,E)$ with vertex set $V=\{W_1,\ldots ,W_m\}$ (where we contracted each SCC into a single vertex) and edge set $E=\{(i,j)\mid W_i\longrightarrow W_j\}$ by setting $W_i\longrightarrow W_j$ whenever there exists a node from $W_i$ to $W_j$ (we show a modular decomposition of a small BN in Example~\ref{eg:modules_eg}).} 

\begin{eg}\label{eg:modules_eg}
Consider the Boolean network 
\[F(x)=(x_2\wedge x_1, \neg x_1, x_1\vee \neg x_4, (x_1\wedge \neg x_2)\vee (x_3\wedge x_4))\] with wiring diagram in Figure~\ref{fig:modules_eg}a. The restriction of this network to $S_1=\{x_1,x_2\}$ is the 2-variable network $F|_{S_1}(x_1,x_2)=(x_2\wedge x_1, \neg x_1)$, which forms the first module (indicated by the amber box in  Figure~\ref{fig:modules_eg}a), while the restriction of $F$ to $S_2=\{x_3,x_4\}$ is the 2-variable network with external parameters $x_1$ and $x_2$ $F|_{S_2}(x_3,x_4)=(x_1\vee \neg x_4, (x_1\wedge \neg x_2)\vee (x_3\wedge x_4)) )$, which forms the second module (indicated by the green module in Figure~\ref{fig:modules_eg}a). 
Note that the module $F|_{S_1}$, i.e., the restriction of $F$ to $S_1$, is simply the projection of $F$ onto the variables $S_1$ because $W_1$ does not receive feedback from the other component.

The wiring diagram of this network has two strongly connected components $W_1$ and $W_2$ with variables $S_1=\{x_1,x_2\}$ and $S_2=\{x_3,x_4\}$ (Figure~\ref{fig:modules_eg}a), connected according to the directed acyclic graph 
\new{$Q =(V,E)$ with vertex set $V=\{W_1,W_2\}$ (where we contracted each SCC into a single vertex) and edge set 
$E = \{(1,2)\mid W_1\longrightarrow W_2\}$ (Figure~\ref{fig:modules_eg}b).} 
\end{eg}

\begin{figure}
    \centering
    \includegraphics[width=\textwidth]{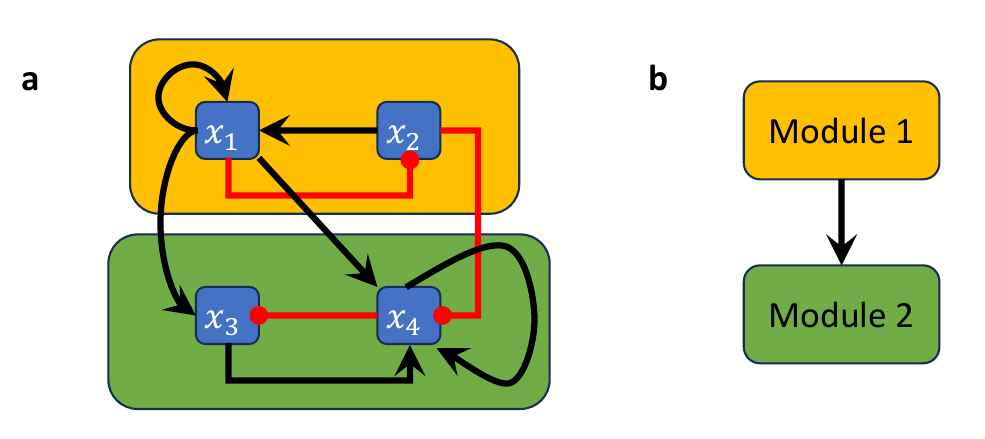}
    \caption{Boolean network decomposition into modules. (a) Wiring diagram of a non-strongly connected Boolean network where the \new[non-trivial]{} modules are highlighted by amber and green boxes. (b) Directed acyclic graph describing the corresponding connections between the \new[non-trivial]{} modules.
    }
    \label{fig:modules_eg}
\end{figure} 

\section{Control via Modularity}\label{sec:control}

In this section, we apply the modular decomposition theory described in the previous section and in \cite{kadelka2023modularity} to make the control problem of Boolean networks more tractable. We show how the decomposition into modules can be used to obtain controls for each module, which can then be combined to obtain a control for the entire network. In this context, two types of control actions are generally considered: edge controls and node controls. For each type of control, one can consider deletions or constant expressions as defined below. The motivation for considering these control actions is that they represent the common interventions that can be implemented in practice. For instance, edge deletions can be achieved by the use of therapeutic drugs that target specific gene interactions, whereas node deletions represent the blocking of effects of products of genes associated to these nodes; see~\cite{choi2012attractor,wooten2021mathematical}.

Once the modules have been identified, different methods for phenotype control (that is, control of the attractor space) can be used to identify controls in these networks. Some of these methods employ stable motifs~\cite{zanudo2015cell}, feedback vertex sets~\cite{zanudo2017structure}, as well as algebraic approaches~\cite{murrugarra2016identification,murrugarra2015molecular,sordo2020control}. For our examples below, we will use the methods defined in~\cite{zanudo2015cell, murrugarra2016identification, zanudo2017structure} to find controls for the \new{modules}. 

A Boolean network $F=(f_1,\ldots,f_n):\FF_2^n\rightarrow \FF_2^n$ with \emph{control} is 
a Boolean network $F^\mu:\mathbb{F}_2^n\times U\rightarrow \mathbb{F}_2^n$,
where $U$ is a set that denotes all possible controls, defined below. The case of no control coincides with the original Boolean network, that is, $F^\mu(x,0)=F(x)$. Given a control $\mu\in U$, the dynamics are given by $x(t+1)=F^\mu(x(t),\mu)$. See~\cite{murrugarra2016identification} for additional details and examples of how to encode control edges and nodes in a Boolean network.

\begin{dfn}[Edge Control]\label{def:edge_del}
Consider the edge $x_i\rightarrow x_j$ in the wiring diagram ${W}$. 
The function
\begin{equation}
\label{edge_del_def}
F^\mu_j(x,\mu_{i,j}) := f_j(x_1,\dots,(\mu_{i,j}+1)x_i+\mu_{i,j}a_i,\dots,x_n),
\end{equation}
where $a_i$ is a constant in $\FF_2$, encodes the control of the edge $x_i\rightarrow x_j$, since for each possible value of $\mu_{i,j}\in \mathbb{F}_2$ we have the following control settings: 
\begin{itemize}
  \item If $\mu_{i,j}=0$, $F^\mu_j(x,0) = f_j(x_1,\dots,x_i,\dots,x_n)$. That is, the control is not active.
  \item If $\mu_{i,j}=1$, $F^\mu_j(x,1) = f_j(x_1,\dots,x_i=a_i,\dots,x_n)$. In this case, the control is active, and the action represents the removal of the edge $x_i\rightarrow x_j$ when $a_i=0$, and the constant expression of the edge if $a_i=1$. We use  $x_i\xrightarrow[]{a_i} x_j$ to denote that the control is active.
\end{itemize}

\end{dfn}
This definition can be easily extended for the control of many edges, so that we obtain $F^\mu:\mathbb{F}_2^n\times \mathbb{F}_2^e \rightarrow \mathbb{F}_2^n$, where $e$ is the number of edges in the wiring diagram. Each coordinate, $\mu_{i,j}$, of $\mu$ in $F^\mu(x,\mu)$ encodes the control of an edge $x_i\rightarrow x_j$.
\begin{dfn}[Node Control]\label{def:node_del}
Consider the node $x_i$ in the wiring diagram ${W}$. The function
\begin{equation}
\label{node_del_def}
F^\mu_j(x,\mu^{-}_i,\mu^{+}_i) := (\mu^{-}_i+\mu^{+}_i+1)f_j(x) + \mu^{+}_i
\end{equation}
encodes the control (knock-out or constant expression) of the node $x_i$, since for each possible value of $(\mu^{-}_i,\mu^{+}_i)\in \mathbb{F}_2^2$ we have the following control settings:  
\begin{itemize}
  \item For $\mu^{-}_i=0, \mu^{+}_i=0$, $F^\mu_j(x,0,0) = f_j(x)$. That is, the control is not active.
  \item For $\mu^{-}_i=1, \mu^{+}_i=0$, $F^\mu_j(x,1,0)  = 0$. This action represents the knock-out of the node $x_j$.
  \item For $\mu^{-}_i=0, \mu^{+}_i=1$, $F^\mu_j(x,0,1)  = 1$. This action represents the constant expression of the node $x_j$.
  \item For $\mu^{-}_i=1, \mu^{+}_i=1$, $F^\mu_j(x,1,1) =  f_j(x_{t_1},\dots,x_{t_m})+1$. This action changes the Boolean function to its negative value. This case is usually not considered in the control search since it is biologically impractical to implement.
\end{itemize}
\end{dfn}

We note that the algebraic framework is versatile enough that we can encode any type of control, such as a combination of node and edge control at the same time.

\begin{dfn}
A control $\mu$ \emph{stabilizes} a given Boolean network $F: \FF_2^n\rightarrow \FF_2^n$ at an attractor $\mathcal{C}$ $\subseteq \FF_2^n$
when the resulting network after applying $\mu$ to $F$ (denoted here as $F^\mu$) has $\mathcal{C}$ as its only attractor. \new{Note that $\mu$ may contain a combination of one or multiple edge or node controls, as described in Definitions~\ref{edge_del_def} and \ref{def:node_del}.}
\end{dfn}

For a Boolean network $F$, we let $\mathcal{A}(F)$ denote the set of its attractors.
Whenever a Boolean network $F$ has more than one module we say that it is \textit{decomposable} into its constituent modules $F_1, F_2, \new{\ldots}, F_m$ ($m \geq 2$), and write $F=F_1\rtimes_{P_1} F_2\rtimes_{P_2}\cdots\rtimes_{P_{m-1}}F_m$ where the semi-product operation $\rtimes_{P_i}$ indicates the coupling of the subnetworks, as described in~\cite{kadelka2023modularity} (see Example~\ref{eg:coupling} for an example of coupling).
Furthermore, from the decomposition theory described in~\cite{kadelka2023modularity}, the attractors of $F$ are of the form $\mathcal{C} = \mathcal{C}_1\oplus\mathcal{C}_2\oplus\cdots\oplus\mathcal{C}_n$ where $\mathcal{C}_i\in\mathcal{A}(F_i)$ is an attractor of the subnetwork, for $i=1,\dots,n$
\new{(one can think of this direct sum operation as concatenating the attractors from the different modules, see Example~\ref{ex:control0} where we show an example of this operation)}.

\begin{eg}\label{eg:coupling}
Consider the Boolean network in Example~\ref{eg:modules_eg},
\[F(x)=(x_2\wedge x_1, \neg x_1, x_1\vee \neg x_4, (x_1\wedge \neg x_2)\vee (x_3\wedge x_4))\] with wiring diagram in Figure~\ref{fig:modules_eg}a. 
Let $F_1 = F|_{S_1}(x_1,x_2)=(x_2\wedge x_1, \neg x_1)$ be the restriction of $F$ to $S_1=\{x_1,x_2\}$ which forms the first module (indicated by the amber box in  Figure~\ref{fig:modules_eg}a) and let $F_2 = F|_{S_2}(x_3,x_4)=(e_1\vee \neg x_4, (e_1\wedge \neg e_2)\vee (x_3\wedge x_4)) )$
be the restriction of $F$ to $S_2=\{x_3,x_4\}$, which forms the second module (indicated by the green box in Figure~\ref{fig:modules_eg}a). 
Here, $e_1$ and $e_2$ are external parameters of $F|_{S_2}$ that take the place of $x_1$ and $x_2$, which is indicated by the coupling $P=\{x_1\to e_1, x_2\to e_2\}$.
$F$ is thus decomposable into $F_1$ and $F_2$ and we have $F = F_1\rtimes_P F_2$.

\end{eg}

The following theorem takes advantage of the modular structure of the network to find controls one module at a time. 

\begin{thm}\label{thm:control0}
Given a decomposable network $F = F_1\rtimes_P F_2$, if $\mu_1$ is a control that stabilizes $F_1$ in $\mathcal{C}_1$ (whether $\mathcal{C}_1$ is an existing attractor of $F_1$ or a new one, after applying control $\mu_1$) and $\mu_2$ is a control that stabilizes $F_2$ in $\mathcal{C}_2$ (whether $\mathcal{C}_2$ is an existing attractor of $F_2$ or a new one, after applying control $\mu_2$), then $\mu = (\mu_1, \mu_2)$ is a control that stabilizes $F$ 
in $\mathcal{C} = \mathcal{C}_1\oplus\mathcal{C}_2$ provided that at least one of $\mathcal{C}_1$ or $\mathcal{C}_2$ is a steady state.
\end{thm}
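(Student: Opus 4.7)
The plan is to exploit the fact that module-local controls preserve the modular decomposition. Since $\mu_1$ only alters functions and edges inside $F_1$, and $\mu_2$ only inside $F_2$, neither affects the coupling $P$, so $F^\mu = F_1^{\mu_1}\rtimes_P F_2^{\mu_2}$ is itself a decomposable network with the same coupling structure. This reduces the problem to analyzing attractors of this coupled controlled system in terms of attractors of its two constituents.

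Next I would invoke the attractor-decomposition result from~\cite{kadelka2023modularity}: every attractor $\mathcal{D}$ of $F^\mu$ has the form $\mathcal{D}=\mathcal{D}_1\oplus\mathcal{D}_2$, where $\mathcal{D}_1$ is an attractor of $F_1^{\mu_1}$ and $\mathcal{D}_2$ is an attractor of $F_2^{\mu_2}$ when its external parameters are driven through $P$ by $\mathcal{D}_1$. By the stabilization hypothesis on $\mu_1$, the unique attractor of $F_1^{\mu_1}$ is $\mathcal{C}_1$, forcing $\mathcal{D}_1=\mathcal{C}_1$; the hypothesis on $\mu_2$ then forces $\mathcal{D}_2=\mathcal{C}_2$. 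It remains to show that $\mathcal{C}_1\oplus\mathcal{C}_2$ is a \emph{single} attractor and that every trajectory converges to it.

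This is where the steady-state hypothesis does the real work, and I would split into two cases. If $\mathcal{C}_1=\{c_1\}$ is a steady state, then once any trajectory enters $\mathcal{C}_1$ the external parameters of $F_2^{\mu_2}$ become constant equal to $P(c_1)$, so the $x_2$-dynamics reduce to an autonomous Boolean network whose unique attractor is $\mathcal{C}_2$; the combined system therefore converges to $\{c_1\}\times\mathcal{C}_2$. If instead $\mathcal{C}_2=\{c_2\}$ is a steady state, then $c_2$ must be fixed by $F_2^{\mu_2}$ for every external parameter value in $P(\mathcal{C}_1)$, so once $x_1$ enters the cycle $\mathcal{C}_1$ and $x_2$ has reached $c_2$, the state $(c_1^t,c_2)$ persists as $c_1^t$ cycles through $\mathcal{C}_1$, producing the single attractor $\mathcal{C}_1\times\{c_2\}$.

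The main obstacle is precisely this final uniqueness step. Without a steady-state assumption, two cycles of differing (or even equal but misaligned) periods can combine into several distinct attractors distinguished by their relative phase, and the formal symbol $\mathcal{C}_1\oplus\mathcal{C}_2$ then fails to single out a unique attractor of $F^\mu$. The steady-state hypothesis removes this phase ambiguity: as soon as one component is a fixed point, every trajectory of $F^\mu$ converges to the same combined attractor, which is exactly the stabilization required by the definition preceding the theorem.
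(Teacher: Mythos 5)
Your proposal is correct and follows essentially the same route as the paper: rewrite $F^\mu = F_1^{\mu_1}\rtimes_P F_2^{\mu_2}$, apply the attractor-decomposition result of~\cite{kadelka2023modularity} to conclude every attractor has the form $\mathcal{C}_1\oplus\mathcal{C}_2$, and use the steady-state hypothesis to rule out phase ambiguity. Your explicit two-case argument for the uniqueness of the combined attractor is just a spelled-out version of the step the paper states in one line (``the product of a steady state and a cycle results in only one attractor''), so there is no substantive difference.
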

\begin{proof}
Let $F^{\mu_1}_1$ be the resulting network after applying the control $\mu_1$. Thus, the dynamics of $F^{\mu_1}_1$ is $\mathcal{C}_1$, that is $\mathcal{A}(F^{\mu_1}_1) =\mathcal{C}_1$. Similarly, the 
dynamics of $F^{\mu_2}_2$ 
is $\mathcal{C}_2$. 
\new{Let $F^{\mathcal{C}_1, \mu_2}_2$ be the network that results after setting all external parameters of $F_2$ to the states of $\mathcal{C}_1$ and
applying the control $\mu_2$.}
Then, $\mathcal{A}(F^{\mathcal{C}_1, \mu_2 }_2) = \mathcal{C}_2$. Thus,
\[
F^\mu = (F_1\rtimes_P F_2)^\mu = F_1^\mu\rtimes_P F_2^\mu = F_1^{\mu_1}\rtimes_P F_2^{\mu_2}.
\]
Therefore,
\[
\mathcal{A}(F^\mu) = \mathcal{A}(F_1^{\mu_1}\rtimes_P F_2^{\mu_2}) = \bigsqcup_{\mathcal{C}'\in\mathcal{A}(F_1^{\mu_1})}\mathcal{C}'\oplus\mathcal{A}(F^{\mathcal{C}',\mu_2}_2)
= \mathcal{C}_1\oplus\mathcal{A}(F^{\mathcal{C}_1,\mu_2}_2) = \mathcal{C}_1 \oplus \mathcal{C}_2.
\]
For the last equality we used the fact that the product of a steady state and a cycle (or vice versa) will result in only one attractor for the combined network. \new{Notice, however, that} multiplying two attractors (of length greater than 1) might result in several attractors for the composed network due to the attractors starting at different states. 

It follows that there is only one attractor of $F^\mu$ and that attractor is $\mathcal{C}_1\oplus\mathcal{C}_2$.
Thus, $F$ is stabilized by $\mu=(\mu_1, \mu_2)$ and we have $\mathcal{A}(F^{\mu}) = \mathcal{C}$.
\end{proof}

\begin{eg}\label{ex:control0}
Consider the network $F(x_1,x_2,x_3,x_4)=(x_2,x_1,x_2\wedge x_4,x_3)$ \new{whose wiring diagram and state space are given in Figure~\ref{fig:state-space}, which can be decomposed into $F=F_1\rtimes_P F_2$, with $F_1 = F|_{S_1}(x_1,x_2)=(x_2,x_1)$ where $S_1=\{x_1,x_2\}$ and 
$F_2=F|_{S_2}(x_3,x_4)=(e_2\wedge x_4,x_3)$ where $S_2=\{x_3,x_4\}$ and $e_2$ is an external parameter.} 
Here the coupling is given by $P=\{x_2\to e_2\}$.
Suppose we want to stabilize $F$ in 0110\new{,} which is not an attractor of $F$ (\new{see Figure~\ref{fig:state-space}(c)}).
\new{The set of attractors of $F_1$ is given by } $\mathcal{A}(F_1)=\left\{00,11,\{01,10\}\right\}$.
\begin{itemize}
\item Consider the control $\mu_1:(x_1\xrightarrow[]{1} x_2,x_2\xrightarrow[]{0} x_1)$. That is, the control is the combined action of setting the input from $x_1$ to $x_2$ to 1 and the input from $x_2$ to $x_1$ to 0. The control $\mu_1$ stabilizes $F_1$ at
$01$, 
which is not an original attractor of $F_1$. Let $\mathcal{C}_1=\{01\}\in\mathcal{A}(F_1^{\mu_1})$.
\new{Let $F^{\mathcal{C}_1}_2$ be the network that results after setting all external parameters of $F_2$ to the states of $\mathcal{C}_1$.}
Note that the space of attractors for $F_2^{\mathcal{C}_1}$ is $\mathcal{A}(F_2^{\mathcal{C}_1})=\{00,11,\{01,10\}\}$.

\item Now consider the control $\mu_2:(x_4\xrightarrow[]{1} x_3,x_3\xrightarrow[]{0} x_4)$. That is, the control is the combined action of setting the input from $x_4$ to $x_3$ to 1 and the input from $x_3$ to $x_4$ to 0. This control stabilizes $F_2^{\mathcal{C}_1}$ at $\mathcal{C}_2=\{10\}\in\mathcal{A}(F_2^{\mathcal{C}_1})$, which is not an original attractor of $F_2^{\mathcal{C}_1}$.

\item Finally, the control $\mu=(\mu_1, \mu_2)$ stabilizes $F$ at $\mathcal{C} = \mathcal{C}_1\oplus\mathcal{C}_2 = \{0110\}$ \new{by Theorem~\ref{thm:control0}}. 
\end{itemize}
\end{eg}

Theorem~\ref{thm:control0} shows how the modular structure can be used to identify controls that stabilize the network in any desired state. 
In particular, we can use the modular structure of a network to find controls that stabilize a network at an existing attractor, which is often the case in biological control applications.
We state this fact in the following corollary.

\begin{cor}\label{cor:control1}
Given a decomposable network $F = F_1\rtimes_P F_2$, let $\mathcal{C} = \mathcal{C}_1\oplus\mathcal{C}_2$ be an attractor of $F$, where $\mathcal{C}_1\in\mathcal{A}(F_1)$ and $\mathcal{C}_2\in\mathcal{A}(F^{\mathcal{C}_1}_2)$ and at least
$\mathcal{C}_1$ or $\mathcal{C}_2$ is a steady state. If $\mu_1$ is a control that stabilizes $F_1$ in $\mathcal{C}_1$ and $\mu_2$ is a control that stabilizes $F^{\mathcal{C}_1}_2$ in $\mathcal{C}_2$, then $\mu=(\mu_1,\mu_2)$ is a control that stabilizes $F$ in $\mathcal{C}$.
\end{cor}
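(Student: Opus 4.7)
The plan is to reduce the corollary directly to Theorem~\ref{thm:control0}. The hypotheses here are a strict specialization of those of the theorem: the only extra assumption is that $\mathcal{C} = \mathcal{C}_1 \oplus \mathcal{C}_2$ is already an attractor of $F$, which is not required by the theorem. So in essence, the corollary amounts to noting that Theorem~\ref{thm:control0} continues to apply when we happen to aim at a pre-existing attractor of $F$.

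Concretely, I would proceed as follows. First, observe that $\mu_1$ stabilizes $F_1$ in $\mathcal{C}_1$, so $\mathcal{A}(F_1^{\mu_1}) = \{\mathcal{C}_1\}$. Next, since $\mu_2$ stabilizes $F_2^{\mathcal{C}_1}$ in $\mathcal{C}_2$, we have $\mathcal{A}((F_2^{\mathcal{C}_1})^{\mu_2}) = \{\mathcal{C}_2\}$; identifying $\mathcal{C}_2$ as an attractor of $F_2$ (with external parameters free) stabilized under the hypotheses of the theorem simply amounts to interpreting the external-parameter specialization along the coupling $P$ as in Theorem~\ref{thm:control0}. The steady-state hypothesis on $\mathcal{C}_1$ or $\mathcal{C}_2$ is inherited verbatim. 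All three premises of Theorem~\ref{thm:control0} are thus met, and its conclusion gives $\mathcal{A}(F^\mu) = \{\mathcal{C}_1 \oplus \mathcal{C}_2\} = \{\mathcal{C}\}$, which is precisely the statement that $\mu = (\mu_1,\mu_2)$ stabilizes $F$ at $\mathcal{C}$.

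I do not expect a real obstacle — the proof is essentially a one-line invocation of Theorem~\ref{thm:control0}. The only subtle point worth making explicit is that the assumption ``$\mathcal{C}$ is an attractor of $F$'' plays no role in the stabilization argument itself; it serves only to frame the biologically relevant scenario in which control keeps the system inside a pre-existing phenotype rather than creating a new one. Consequently, the corollary could be stated slightly more generally, but in the form given it already covers the common use case in applications such as the T-LGL leukemia model discussed later in the paper.
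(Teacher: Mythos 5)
Your proposal is correct and matches the paper's treatment: the paper gives no separate proof of Corollary~\ref{cor:control1}, presenting it as an immediate specialization of Theorem~\ref{thm:control0}, exactly as you argue. Your added remark that the hypothesis ``$\mathcal{C}$ is an attractor of $F$'' is not needed for the stabilization argument (it only frames the use case of steering to an existing attractor) is also consistent with how the paper uses the result.
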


Theorem~\ref{thm:control0} uses the modular structure of a Boolean network to identify controls that stabilize the network in any desired attractor. In biological applications, the attractors typically correspond to distinct biological phenotypes (defined more rigorously in the next section) and a common question is how to force a network to always transition to only one of these phenotypes. For example, cancer biologists may use an appropriate Boolean network model with the two phenotypes proliferation and apoptosis to identify drug targets (i.e., edge or node controls), which force the system to always undergo apoptosis. 
\new{In \cite{kadelka2023modularity}, we applied the approach described in Corollary~\ref{cor:control1} to find an efficient intervention that solves the control problem for the pancreatic cancer model with 69 nodes. 
In the following example, we illustrate the approach described in Corollary~\ref{cor:control1} using a toy model.
}

\begin{eg}\label{ex:stabilize}
Consider again the network $F(x_1,x_2,x_3,x_4)=(x_2,x_1,x_2  \wedge x_4,x_3) = F_1\rtimes_P F_2$ from Example~\ref{ex:control0} with 
\new{$F_1 = F|_{S_1}(x_1,x_2)=(x_2,x_1)$ where $S_1=\{x_1,x_2\}$ and 
$F_2=F|_{S_2}(x_3,x_4)=(e_2\wedge x_4,x_3)$ where $S_2=\{x_3,x_4\}$ and $e_2$ is an external parameter.
Here the coupling is given by $P=\{x_2\to e_2\}$.}
Suppose we want to stabilize $F$ in 1111, which is \new{one of the attractors} of $F$ (\new{see Figure~\ref{fig:state-space}(c)}). 
\new{The set of attractors of $F_1$ is } $\mathcal{A}(F_1)=\left\{00,11,\{01,10\}\right\}$. Let $\mathcal{C}_1=\{11\}\in\mathcal{A}(F_1)$.
\begin{itemize}
\item The edge control $\mu_1:x_1\xrightarrow[]{1} x_2$ (that is, the control that constantly expresses the edge from $x_1$ to $x_2$) stabilizes $F_1$ at
$\mathcal{C}_1=\{11\}$. 
\new{Let $F^{\mathcal{C}_1}_2$ be the network that results after setting all external parameters of $F_2$ to the states of $\mathcal{C}_1$.}
The space of attractors for $F_2^{\mathcal{C}_1}$ is then $\mathcal{A}(F_2^{\mathcal{C}_1})=\{00,11,\{01,10\}\}$. Note that $x_2\xrightarrow[]{1} x_1$ would be an alternative control. 

\item The edge control $\mu_2:x_4\xrightarrow[]{1} x_3$ (that is, the control that constantly expresses the edge from $x_4$ to $x_3$) stabilizes $F_2^{\mathcal{C}_1}$ at $\mathcal{C}_2=\{11\}\in\mathcal{A}(F_2^{\mathcal{C}_1})$. Again, note that $x_3\xrightarrow[]{1} x_4$ would be an alternative control.

\item Now, the control $\mu=(\mu_1,\mu_2) = (x_1\xrightarrow[]{1} x_2, x_4\xrightarrow[]{1} x_3)$ stabilizes $F$ at $\mathcal{C} = \mathcal{C}_1\oplus\mathcal{C}_2 = \{1111\}$ \new{by Corollary~\ref{cor:control1}}.
\end{itemize}
\end{eg}

\begin{remark}\label{rem:control0}
In Theorem~\ref{thm:control0}, we required  one of the stabilized attractors to be a steady state in order to be able to combine the controls from the modules.
We can remove this requirement by using the following definition of stabilization for non-autonomous networks \new{(defined in~\cite{kadelka2023modularity})},
which will guarantee that $\mathcal{C}_1$ and $\mathcal{C}_2$ can be combined in a unique way, resulting in a unique attractor of the whole network. 
\end{remark}

\begin{dfn}\label{def:non_aut}\cite{kadelka2023modularity}
A \emph{non-autonomous Boolean network} is defined by $$y(t+1)=H(g(t),y(t)),$$
where $H:\FF_2^{m+n}\rightarrow \FF_2^n$ and $\left(g(t)\right)_{t=0}^\infty$ is a sequence with elements in $\FF_2^m$. We call this type of network non-autonomous because its dynamics will depend on $g(t)$. We use $H^g$ to denote this non-autonomous network.

A state $c\in\FF_2^n$ is a \emph{steady state} of $H^g$ if $H(g(t),c)=c$ for all $t$. Similarly, an ordered set with $r$ elements, $\mathcal{C}=\{c_1,\ldots,c_r\}$, is an \emph{attractor of length $r$} of $H^g$ if $c_2=H(g(1),c_1)$, $c_3=H(g(2),c_2),\ldots, 
c_r=H(g(r-1),c_{r-1})$,
$c_1=H(g(r),c_r)$, $c_2=H(g(r+1),c_1),\ldots$. Note that in general $g(t)$ is not necessarily of period $r$ and may even not be periodic.
\end{dfn}

If $H(g(t),y)=G(y)$ for some network $G$ (that is, it does not depend on $g(t)$)
for all $t$, then $y(t+1)=H(g(t),y(t))=G(y(t))$  and this definition of attractors coincides with the classical definition of attractors for (autonomous) Boolean networks (Definition~\ref{def_attractors}). 

\begin{dfn} 
Consider a controlled non-autonomous network given by $y(t+1)=F_2(g(t),y(t),\mu)$,
where $g(t)$ is a trajectory representation of an attractor $\mathcal{C}_1$ of an upstream network. We say that a control $\mu_2$ \emph{stabilizes} this network, $F_2^{\mathcal{C}_1}$ (as in Definition~\ref{def:non_aut}),  at an attractor $\mathcal{C}_2$ when the resulting network after applying $\mu_2$ (denoted here as $F_2^{\mathcal{C}_1,\mu_2}$) has $\mathcal{C}_2$ as its unique attractor. 
For non-autonomous networks the definition of unique attractor requires that $\left(g(t),y(t)\right)_{t=0}^\infty$ has a unique periodic trajectory up to shifting of $t$ (which is automatically satisfied if $\mathcal{C}_1$ or $\mathcal{C}_2$ is a steady state).
\end{dfn}

\new{
\begin{figure}
    \centering
    \includegraphics[width=\textwidth]{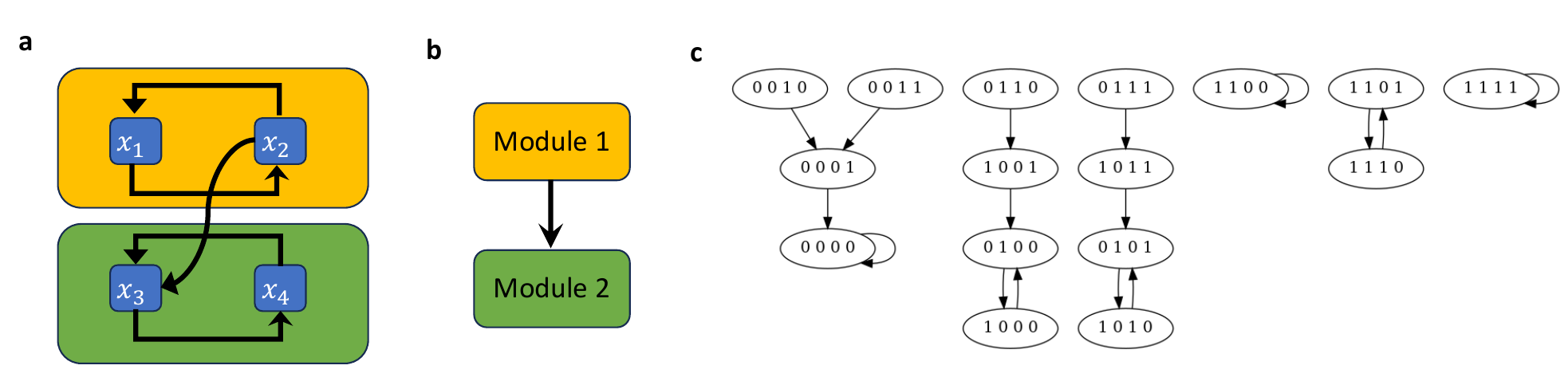}
    \caption{Wiring diagram, directed acyclic graph, and state space for the network in Examples~\ref{ex:control0} and \ref{ex:stabilize}.
    (a) Wiring diagram where the \new[non-trivial]{} modules are highlighted by amber and green boxes. (b) Directed acyclic graph describing the corresponding connections between the \new[non-trivial]{} modules.
    (c) State space of the network generated with Cyclone~\cite{dimitrova2023cyclone}.}
    \label{fig:state-space}
\end{figure}   
}

\section{Control via Modularity and Canalization}
In addition to using the modular structure of the network, we can take advantage of the canalizing structure of the regulatory functions to identify control targets.

We first review some concepts and definitions, and introduce the concept of \emph{canalization}.

\begin{dfn}\label{def_essential}
A Boolean function $f(x_1,\ldots,x_n)$ is \emph{essential} in the variable $x_i$ if there exists an $\vecx \in \{0,1\}^n$ such that 
$$f(\vecx)\neq f(\vecx\new{+\, \mathbf{e_i}}),$$
where \new{$\mathbf{e_i}$} is the $i$th unit vector. In that case, we also say $f$ \emph{depends} on $x_i$.
\end{dfn}

\begin{dfn}\label{def_canalizing}
A Boolean function $f(x_1,\new{\ldots},x_n)$ is \emph{canalizing} if there exists a variable $x_i$, a Boolean function $g(x_1,\ldots,x_{i-1},x_{i+1},\ldots,x_n)$ and $a,\,b\in\{0,\,1\}$ such that
$$f(x_1,x_2,\new{\ldots},x_n)= \begin{cases}
b,& \ \text{if}\ x_i=a\\
g(x_1,x_2,\new{\ldots},x_{i-1},x_{i+1},\new{\ldots},x_n),& \ \text{if}\ x_i\neq a
\end{cases}$$
In that case, we say that $x_i$ \emph{canalizes} $f$ (to $b$) and call $a$ the \emph{canalizing input} (of $x_i$) and $b$ the \emph{canalized output}. 
\end{dfn}

\begin{dfn}\label{def_nested_canalizing}
A Boolean function $f(x_1,\ldots,x_n)$ is \new{a} \emph{nested canalizing \new{function (NCF)}} with respect to the permutation $\sigma \in \mathcal{S}_n$, inputs $a_1,\ldots,a_n$ and outputs $b_1,\ldots,b_n$, if
\begin{equation*}f(x_{1},\ldots,x_{n})=
\left\{\begin{array}[c]{ll}
b_{1} & x_{\sigma(1)} = a_1,\\
b_{2} & x_{\sigma(1)} \neq a_1, x_{\sigma(2)} = a_2,\\
b_{3} & x_{\sigma(1)} \neq a_1, x_{\sigma(2)} \neq a_2, x_{\sigma(3)} = a_3,\\
\vdots  & \vdots\\
b_{n} & x_{\sigma(1)} \neq a_1,\ldots,x_{\sigma(n-1)}\neq a_{n-1}, x_{\sigma(n)} = a_n,\\
1\new{+}\, b_{n} & x_{\sigma(1)} \neq a_1,\ldots,x_{\sigma(n-1)}\neq a_{n-1}, x_{\sigma(n)} \neq a_n.
\end{array}\right.\end{equation*}
The last line ensures that $f$ actually depends on all $n$ variables.
\end{dfn}

To account for partial canalization, we also define $k$-canalizing functions\new{,} which were first introduced in~\cite{he2016stratification}.

\begin{dfn}\label{def:kcanalizing}
A Boolean function $f(x_1,\ldots,x_n)$ is \emph{$k$-canalizing}, where $1 \leq k \leq n$, with respect to the permutation $\sigma \in \mathcal{S}_n$, inputs $a_1,\ldots,a_k$, and outputs $b_1,\ldots,b_k$ if
\begin{equation*}
\begin{array}{l}
f(x_{1},\ldots,x_{n})=\\
\left\{\begin{array}[c]{ll}
b_{1} & x_{\sigma(1)} = a_1,\\
b_{2} & x_{\sigma(1)} \neq a_1, x_{\sigma(2)} = a_2,\\
b_{3} & x_{\sigma(1)} \neq a_1, x_{\sigma(2)} \neq a_2, x_{\sigma(3)} = a_3,\\
\vdots  & \vdots\\
b_{k} & x_{\sigma(1)} \neq a_1,\ldots,x_{\sigma(k-1)}\neq a_{k-1}, x_{\sigma(k)} = a_k,\\
f_C\not\equiv b_k & x_{\sigma(1)} \neq a_1,\ldots,x_{\sigma(k-1)}\neq a_{k-1}, x_{\sigma(k)} \neq a_k,
\end{array}\right.
\end{array}
\end{equation*}
where $f_C = f_C(x_{\sigma(k+1)},\ldots,x_{\sigma(n)})$ is the \emph{core function}, a Boolean function on $n-k$ variables. When $f_C$ is not canalizing, then the integer $k$ is the \emph{canalizing depth} of $f$ \cite{Layne:2012aa}. Note that an $n$-canalizing function (i.e., a function where all variables become eventually canalizing) is \new[also called a \emph{nested canalizing function} (NCF)]{an NCF}. 
\end{dfn}

We restate the following stratification theorem for reference.

\begin{thm}[\cite{he2016stratification}]\label{thm:he}
Every Boolean function $f(x_1,\ldots,x_n)\not\equiv 0$ can be uniquely written as 
\begin{equation}\label{eq:matts_theorem}
    f(x_1,\ldots,x_n) = M_1(M_2(\cdots (M_{r-1}(M_rp_C + 1) + 1)\cdots)+ 1)+ q,
\end{equation}
where each $M_i = \displaystyle\prod_{j=1}^{k_i} (x_{i_j} + a_{i_j})$ \new{is a nonconstant function, called an \emph{extended monomial}}, $p_C$ is the \emph{core polynomial} of $f$, and $k = \displaystyle\sum_{i=1}^r k_i$ is the canalizing depth. Each $x_i$ appears in exactly one of $\{M_1,\ldots,M_r,p_C\}$, and the only restrictions are the following ``exceptional cases'':
\begin{enumerate}
    \item If $p_C\equiv 1$ and $r\neq 1$, then $k_r\geq 2$;
    \item If $p_C\equiv 1$ and $r = 1$ and $k_1=1$, then $q=0$.
\end{enumerate}
When $f$ is not canalizing (\textit{i.e.}, when $k=0$), we simply have $p_C = f$.
\end{thm}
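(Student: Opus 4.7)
I would establish this by strong induction on the number of variables of $f$, handling existence first and then uniqueness.

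For \emph{existence}, if $f$ is not canalizing, set $r=0$ and $p_C=f$. Otherwise, the first step is to observe that all variables that canalize $f$ must share a common canalized output $b\in\FF_2$: if $x_i$ canalized $f$ to $0$ at input $a_i$ while $x_j$ canalized $f$ to $1$ at input $a_j$, then the joint assignment $x_i=a_i,\,x_j=a_j$ would force $f$ to equal both $0$ and $1$. Let $X_1=\{x_{i_1},\ldots,x_{i_{k_1}}\}$ be the \emph{maximal} set of variables canalizing $f$ to $b$, with canalizing inputs $a_{i_j}$ (uniquely determined because $f$ depends essentially on each of these variables). Set $q:=b$ and $M_1:=\prod_{j=1}^{k_1}(x_{i_j}+a_{i_j})$. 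Since $f+q$ vanishes whenever any $x_{i_j}=a_{i_j}$, standard polynomial division in the Boolean ring produces a unique Boolean function $h$ on the remaining variables with $f=M_1 h+q$.

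The key claim that drives the recursion is that if $h$ is canalizing, every canalizing variable of $h$ must canalize to $1$, never to $0$. Indeed, if $h|_{x_j=a_j}=0$ for some $x_j\notin X_1$, then $f|_{x_j=a_j}=M_1\cdot 0+q=q$, so $x_j$ would canalize $f$ to $q$, violating the maximality of $X_1$. Consequently, applying the inductive hypothesis to $h$ yields a decomposition of the form $h=M_2(M_3(\cdots)+1)+1$, and the alternating ``$+1$'' terms in the outer expression for $f$ appear automatically. The recursion terminates at the first non-canalizing residual, which becomes $p_C$, and reassembling gives the desired form.

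For \emph{uniqueness}, observe that $q$ is forced as the common canalized output of $f$, and both $X_1$ and the inputs $a_{i_j}$ are forced by maximality; hence $M_1$ is intrinsically determined. Applying the same argument recursively to $h$ determines $M_2,\ldots,M_r$ and $p_C$. The main technical obstacle is verifying that the two exceptional cases in the theorem statement exhaust all possible ambiguities. Case (1) addresses the collapse when $p_C\equiv 1$, $r\geq 2$, and $k_r=1$: then $M_r p_C+1=(x_\ell+a_\ell)+1=(x_\ell+a_\ell+1)$ is itself a single linear factor and could be absorbed into $M_{r-1}$, yielding a competing decomposition of depth $r-1$; requiring $k_r\geq 2$ rules this out. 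Case (2) addresses $r=1$, $k_1=1$, $p_C\equiv 1$, where $f=(x+a)+q$ admits the rewriting $(x+a+1)+(q+1)$, and fixing $q=0$ canonically picks one representative. A finite case check (on whether $p_C$ is constant and on the values of the $k_i$) confirms that in every other configuration the extracted data $b$, $X_1$, the $a_{i_j}$, and $p_C$ are rigidly determined by $f$, so no other rewritings exist.
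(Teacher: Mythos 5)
This theorem is not proved in the paper at all: it is imported verbatim, with citation, from \cite{he2016stratification}, so there is no in-paper argument to compare yours against. Judged on its own terms, your plan is essentially the standard layer-peeling induction used in the cited literature, and it is sound: the observation that two distinct canalizing variables must share their canalized output, the factorization $f=M_1h+q$ with $h$ the restriction of $f+q$ to the non-canalizing inputs of the first layer, the key lemma that every canalizing variable of $h$ canalizes it to $1$ (which both forces the alternating ``$+1$''s and rules out $h$ being a single literal, i.e.\ automatically yields $k_r\geq 2$ when $p_C\equiv 1$ and $r\geq 2$), and the identification of the two exceptional cases as exactly the absorption/complementation ambiguities are all correct and are the right ingredients.

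Two places deserve tightening before this is a complete proof. First, in the uniqueness step you assert that $q$, $X_1$, and the inputs $a_{i_j}$ are ``forced by maximality,'' but what actually has to be shown is that in an \emph{arbitrary} decomposition of the stated form satisfying the exceptional-case restrictions, no variable occurring in $M_2,\ldots,M_r$ or $p_C$ can canalize $f$; this is where the restriction $k_r\geq 2$ (when $p_C\equiv 1$, $r\geq 2$) is genuinely used, since for $r=2$, $k_2=1$, $p_C\equiv 1$ the inner expression would collapse to a constant and the lone variable of $M_2$ would canalize $f$. Your ``finite case check'' points at this but should be carried out explicitly. Second, the blanket claim that all canalizing variables of $f$ share one canalized output fails when $f$ has a single essential variable (e.g.\ $f=x_1$ canalizes to both $0$ and $1$), and the constant function $f\equiv 1$ is canalizing under the paper's definition; these degenerate cases are exactly what exceptional case (2) and the $k=0$ convention absorb, so they should be treated separately at the start of the induction rather than folded silently into the general argument.
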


\begin{dfn}\label{def_layers}
Given a Boolean function $f(x_1,\ldots,x_n)$ represented as in Equation~\ref{eq:matts_theorem}, we call the extended monomials $M_i$ the \emph{layers} of $f$ and \new{if $i<j$, we say that $M_i$ and its variables are \emph{more dominant} and $M_j$ and its variables are \emph{less dominant}.} We also define, as in~\cite{kadelka2017influence}, the \emph{layer structure} as the vector $(k_1,\ldots,k_r)$, which describes the number of variables in each layer.  
\end{dfn}
Note that $f$ is nested canalizing if and only if $k_1+\cdots+k_r = n$. 

\begin{remark}~\label{rem:layer_output} Here we note the following important properties of layers of canalization.
\begin{enumerate}
    \item[(a)] Theorem~\ref{thm:he} shows that any Boolean function has a unique extended monomial form given by Equation~\ref{eq:matts_theorem}, in which the variables are partitioned into different layers based on their dominance. Any variable that is canalizing (independent of the values of other variables) is in the first layer. Any variable that ``becomes'' canalizing when excluding all variables from the first layer is in the second layer, etc. All remaining variables that never become canalizing are part of the core polynomial. The number of variables that eventually become canalizing is the canalizing depth of the function. NCFs are exactly those functions where all variables become eventually canalizing (note not all variables of an NCF must be in the first layer).
    \item[(b)] While variables in the same layer may have different canalizing input values, they all share the same canalized output value, \textit{i.e.}, they all canalize a function to the same output. On the other hand, the outputs of two consecutive layers are distinct. Therefore, the number of layers of a $k$-canalizing function, expressed as in Definition~\ref{def:kcanalizing}, is simply one plus the number of changes in the vector of canalized outputs, $(b_1,b_2,\ldots,b_k)$.
\end{enumerate}
\end{remark} 

\begin{eg}
The Boolean functions 
$$f(x_1,x_2,x_3,x_4)=x_1\wedge (\neg x_2\vee (x_3 \wedge x_4)) \qquad \text{and} \qquad g(x_1,x_2,x_3,x_4)=x_1\wedge (\neg x_2\vee x_3 \vee x_4)$$ 
are nested canalizing. The function $f$ has layer structure $(1,1,2)$ because
\begin{itemize}
\item $x_1$ canalizes $f$ to $0$ if it receives its canalizing input $0$;
\item if this is not the case, $x_2=0$ canalizes $f$ to $1$;
\item $x_3=0$ or $x_4=0$ canalizes $f$ to $0$.
\end{itemize}
On the other hand, $g$ has layer structure $(1,3)$. As for $f$, $x_1=0$ can\new{a}lizes $g$ to $0$. If this does not happen, any of the following, $x_2=0, x_3=1, x_4=1$, canalizes $g$ to $1$.
\end{eg}

While finding the layer structure of a Boolean function is an \textbf{NP}-hard problem, there exist several algorithmic implementations~\cite{dimitrova2022revealing}.

\new{Next we will define phenotypes.} In meaningful biological networks, the attractors correspond to phenotypes. Typically a small subset of all Boolean variables is used to define phenotypes. 

\begin{dfn}
Given a Boolean network $F$ with attractor space $\mathcal{A}(F)$ and phenotype-defining variables $\mathcal P \subset\{x_1,\ldots,x_n\}$, we associate the same phenotype to all attractors $\mathcal C \in \mathcal{A}(F)$ that are identical in $\mathcal P$. The states in $\mathcal P$ will be called markers of the phenotype.
\end{dfn}

Suppose $F = F_1\rtimes_P F_2$ is a decomposable network, and that there is a phenotype
that depends on variables in $F_2$ only (that is, all markers of the phenotype are part of $F_2$), and that we wish to control the phenotype through $F_2$.
The most straightforward approach is to set the variables that the phenotype depends on to the appropriate values that result in the desired phenotype. However, such intervention may not be experimentally possible. Instead, we can exploit the canalizing properties of the functions corresponding to the nodes connecting the modules $F_1$ and $F_2$ to identify control targets.

By Theorem~\ref{thm:he}, the variables of any Boolean update function  can be ordered by importance/dominance, based on  which layer they appear in, \new{which is why in Definition~\ref{def_layers} we called variables in lower/upper layers ``more/less dominant," respectively.}
Thus, once we control a variable in a certain layer (by setting it to its canalizing input value), any further control of variables in less dominant layers will have no effect on the function (and thus on the network). We state this fact in the following lemma.

\begin{lem}\label{lem:ctrl+canal}
Suppose $F = F_1\rtimes_P F_2$ is a decomposable network. Suppose further that only one node $x \in F_2$ with update function $f_x$ is regulated by nodes in $F_1$. If $f_x$ is canalizing with $r$ layers, let \new{$M_{\ell}$} be the most dominant layer of $f_x$ which contains nodes from $F_1$. If all regulators of $x$ from $F_1$ appear in the core polynomial, we set $\ell=r+1$. Then, setting $y\not\in F_1$ to its canalizing value decouples the systems $F_1$ and $F_2$, as long as $y$ appears in a layer \new{of $f_x$ that is more dominant than $M_{\ell}$}.
\end{lem}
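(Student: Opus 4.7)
The plan is to apply Theorem~\ref{thm:he} to the update function $f_x$ and then track what happens when a single layer is forced to evaluate to $0$. Write $f_x = M_1(M_2(\cdots(M_{r-1}(M_r p_C + 1)+1)\cdots)+1)+q$ as in Equation~\ref{eq:matts_theorem}, and let $M_j$ denote the (unique) layer containing $y$, so that by hypothesis $j < \ell$. The canalizing input $a_y$ of $y$ in $f_x$ is precisely the value that makes the linear factor $(y + a_y)$ in $M_j$ vanish, so substituting $y = a_y$ forces $M_j = 0$.

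Next I would propagate this vanishing through the nested structure. Define $G_r = M_r p_C$ and $G_i = M_i(G_{i+1}+1)$ for $i = r-1, \ldots, 1$, so that $f_x = G_1 + q$. For $j \geq 2$, substituting $M_j = 0$ gives
\[
G_{j-1} = M_{j-1}\bigl(M_j(G_{j+1}+1) + 1\bigr) = M_{j-1}\cdot 1 = M_{j-1},
\]
which wipes out every reference to $G_{j+1},\ldots,G_r$, hence to the layers $M_{j+1},\ldots,M_r$ and to the core polynomial $p_C$. A straightforward induction then shows that each outer $G_i$ with $i<j$ depends only on variables appearing in $M_1,\ldots,M_{j-1}$. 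The boundary case $j=1$ collapses even further: $f_x = M_1(G_2+1)+q = 0+q = q$, a constant. Invoking the choice of $\ell$, every layer $M_1,\ldots,M_{j-1}$ is strictly more dominant than $M_\ell$ and hence contains no variables from $F_1$. After the substitution $y = a_y$, the function $f_x$ therefore has no functional dependence on any node of $F_1$, and since $x$ was the unique node in $F_2$ regulated by $F_1$, the two modules are decoupled. The convention $\ell = r+1$ (all $F_1$-regulators of $x$ sit in $p_C$) is handled identically: every canalizing layer is then more dominant than $M_\ell$, and the same computation applies.

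The main obstacle is purely notational: one must verify that the alternating ``multiply by $M_i$'' and ``add $1$'' pattern in Equation~\ref{eq:matts_theorem} really does collapse to a function of only the more dominant layers when $M_j$ vanishes, rather than leaving a parity-related residue involving deeper layers. Once the recursion for $G_i$ is written out cleanly this is immediate, and the boundary case $j=1$ and the convention $\ell = r+1$ each require only a single additional line.
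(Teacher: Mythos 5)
Your proposal is correct and follows essentially the same route as the paper: the paper's proof simply cites Theorem~\ref{thm:he} and asserts that a canalized more-dominant layer erases the influence of all less-dominant layers, while you make that collapse explicit via the recursion $G_i = M_i(G_{i+1}+1)$, which is a welcome (and accurate) elaboration of the same argument. The only nitpick is notational: for $j=r$ the vanishing step should read $G_r = M_r p_C = 0$ rather than $M_r(G_{r+1}+1)$, which your own definition of $G_r$ already covers.
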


\begin{proof}
The lemma is a direct consequence of Theorem~\ref{thm:he}. If $y$ receives its canalizing input and is in a more \new{dominant} layer of $f_x$ than all variables in $F_1$, then none of these variables can affect $f_x$ anymore. Thus, controlling $y$ to receive its canalizing input eliminates the link between $F_1$ and $F_2$.
\end{proof}

The modularity approach in \cite{kadelka2023modularity} yields an acyclic directed graph after one collapses each module into a single node. This endows a natural partial ordering on the collection modules of a network where one module precedes the next if there is a path for every node in the first module which ends in the second module. While not all modules are comparable, we can speak of chains of modules which consist of subsets of the partial ordering which are totally ordered.
Furthermore, one can rank the modules based on the percentile scores (i.e., rank module $k$ out of $m$ modules). 
This type of ranking has been studied in~\cite{plaugher2024pancreatic}, where it was shown 
that the importance of the modules is strongly correlated with the aggressiveness of mutations occurring within those modules and the effectiveness of interventions.

\begin{thm}\label{thm:ctrl+canal}
    Suppose $F = F_1\rtimes_{P_1} F_2\rtimes_{P_2}\cdots\rtimes_{P_{n-1}} F_m$ is a decomposable network. If for some $i<j$,
    \begin{enumerate}[(i)]
    \item only one node $x \in F_j$ with update function $f_x$ is regulated by nodes in $F_i$, and
    \item $f_x$ is a canalizing function, which possesses none of the variables from $F_i$ in its most \new{dominant} layer, and
    \item the phenotype of interest only depends on variables in $F_j$ as well as modules $F_k$, for which any chain containing $F_i$ and $F_k$ also contains $F_j.$ \new[(see Figure~\ref{fig:thm412explained} for an example)]{}
    
    \end{enumerate}
    then the module $F_i$ can be excluded from the control search by setting any node $y \notin F_i$ to its canalizing input, as long as this node appears in a more dominant layer of $f_x$ than all inputs of $f_x$ that are part of $F_i$.
\end{thm}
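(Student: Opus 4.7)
The plan is to reduce Theorem~\ref{thm:ctrl+canal} to Lemma~\ref{lem:ctrl+canal} applied between the specific pair $F_i,F_j$, and then use condition (iii) together with the acyclicity of the module DAG to propagate the decoupling forward to the phenotype-relevant modules. I would break the argument into three steps: verifying that Lemma~\ref{lem:ctrl+canal} applies to the pair $(F_i,F_j)$, deducing that the controlled network has no directed signaling path from $F_i$ to any phenotype-marker module, and finally concluding that the phenotype can be controlled without ever intervening in $F_i$.

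First, I would check the hypotheses of Lemma~\ref{lem:ctrl+canal} for the subnetwork obtained by regarding $F_i$ as the upstream module and $F_j$ as the downstream module. Condition (i) supplies the unique regulating node $x\in F_j$, condition (ii) together with the assumption on $y$ places $y$ in a layer of $f_x$ strictly more dominant than every $F_i$-input of $f_x$. Applying the lemma, setting $y$ to its canalizing input value forces $f_x$ to collapse before any variable of $F_i$ is ever read, so in the controlled update function $f_x^\mu$, all $F_i$-variables become inessential. Thus the only edge in the wiring diagram from a node of $F_i$ into a node of $F_j$ is removed, and in the module DAG $Q$ of the controlled network $F^\mu$, the edge $W_i\to W_j$ is deleted (with all other edges unchanged).

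Second, I would use condition (iii) to show that $F_i$ becomes dynamically invisible to the phenotype. Let $F_k$ be any module on which the phenotype depends. By (iii), every chain containing both $W_i$ and $W_k$ also contains $W_j$; equivalently, in the original DAG $Q$, every directed path from $W_i$ to $W_k$ passes through $W_j$. Since the only incoming edge from $W_i$ into $W_j$ has been removed, no directed path in the controlled DAG from $W_i$ reaches $W_j$, and therefore, by the chain condition, no directed path from $W_i$ reaches $W_k$ either. Consequently, the coordinates of any attractor of $F^\mu$ restricted to $F_k$ are determined entirely by the initial conditions and updates of modules not downstream of $W_i$, so the phenotype projection is independent of the state of $F_i$.

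Finally, I would invoke the modular control assembly of Theorem~\ref{thm:control0} (iterated as in its obvious extension to $m$ modules, cf.\ Corollary~\ref{cor:control1}): any control $\mu'$ that stabilizes each phenotype-relevant module at its desired attractor combines with the canalizing control on $y$ to yield a global control that realizes the phenotype, with no component acting inside $F_i$. The main obstacle is the purely combinatorial step two: one must argue cleanly that removing the single edge $W_i\to W_j$ from the module DAG, combined with hypothesis (iii), disconnects $W_i$ from every phenotype-marker module. This is handled by a short contradiction argument, since any surviving directed path from $W_i$ to some $W_k$ would yield a chain through $W_i$ and $W_k$ avoiding $W_j$, contradicting (iii). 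Once the topology is pinned down, the dynamical conclusion follows because information can only flow across the module DAG along its directed edges.
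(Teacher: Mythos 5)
Your proposal follows essentially the same route as the paper's proof: apply Lemma~\ref{lem:ctrl+canal} to the pair $(F_i,F_j)$ so that setting $y$ to its canalizing input decouples $F_i$ from $F_j$, and then use condition (iii) to conclude that $F_i$ can no longer influence the phenotype and may be dropped from the control search. The paper's argument is exactly this two-step observation; your extra bookkeeping on the module DAG and the appeal to Theorem~\ref{thm:control0} for assembling module-level controls elaborate on, but do not alter, that argument.
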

\begin{proof}
By Lemma~\ref{lem:ctrl+canal}, setting $y$ to its canalizing value results in decoupling $F_i$ and $F_j$. $F_i$ will no longer have any effect on $F_j$, and thus, due to condition (iii), on the phenotype of interest. \new{Therefore}, $F_i$ can be removed from the control search.
\end{proof}

\begin{figure}
    \centering
    \includegraphics[width=0.6\textwidth]{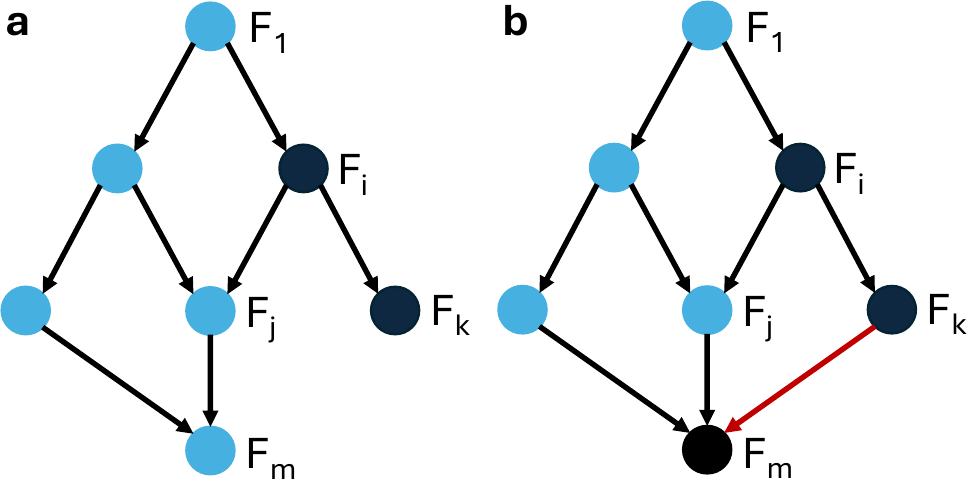}
    \caption{Example of a modular directed acylic graph structure to illustrate Condition (iii) in Theorems~\ref{thm:ctrl+canal} and \ref{thm:edge_control}. (a) Module $F_i$ can be removed from the control search as long as conditions (i) and (ii) are satisfied, and the phenotype of interest depends only on any subset of variables that are part of the blue modules. (b) When a node in module $F_m$ is regulated by a node in module $F_k$ (indicated by the red edge in the directed acyclic graph),  the phenotype may no longer depend on nodes in $F_m$, in order for module $F_i$ to be removable from the control search.}
    \label{fig:thm412explained}
\end{figure}

\new{The third condition of this theorem is illustrated in Figure~\ref{fig:thm412explained}.}

\new{\begin{remark}
    The method in Theorem~\ref{thm:ctrl+canal} can be extended to the case when $F_i$ and $F_j$ are connected via multiple nodes. In that case\new{,} decoupling is achieved through the same procedure presented above, applied to each node in $F_j$ that is regulated by nodes in $F_i$.
\end{remark}
}

\begin{figure}
    \centering
    \includegraphics[width=0.6\textwidth]{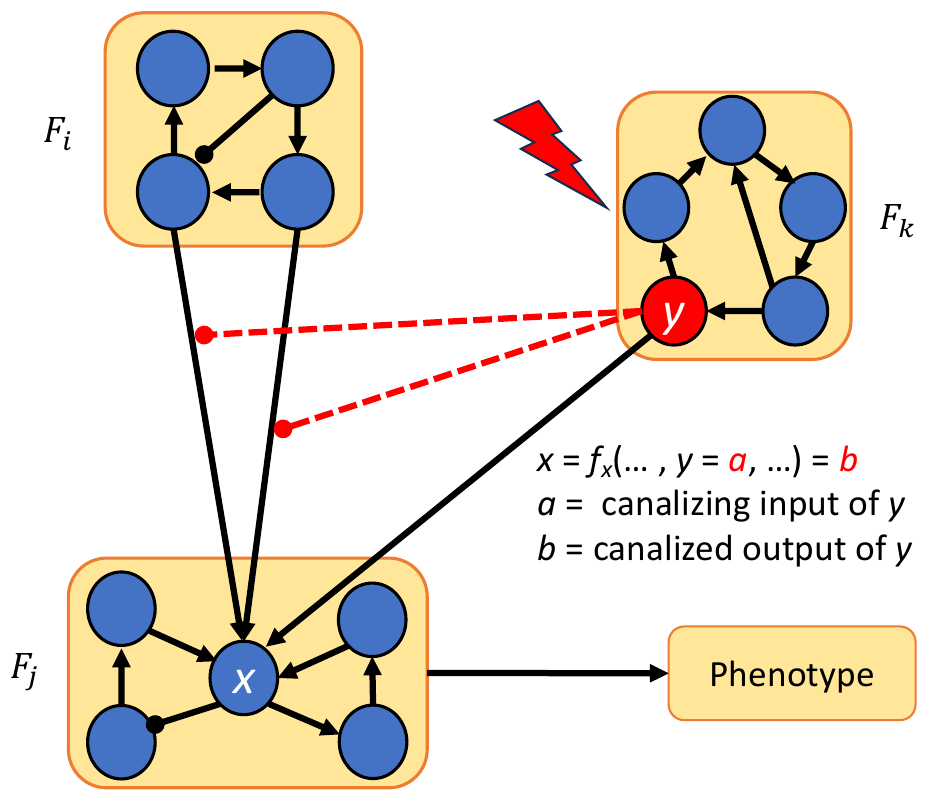}
    \caption{Control via modularity and canalization. Once the network is decomposed into modules $F_1, \new{\ldots}, F_m$, we can override the effect of module $F_i$ by the using another module ($F_k$ in this case) whose variables are inputs of $f_x$ that are \new{located in more dominant layers} than the layers containing the variables of $F_i$. 
    }
    \label{fig:control_with_canalization}
\end{figure}

Theorem~\ref{thm:ctrl+canal} is illustrated in Figure~\ref{fig:control_with_canalization}. Note that node $y$ can be in $F_j$ or some other module as in the figure. In this theorem, we assumed that none of the variables of $F_i$ are in the most dominant layer in the update rules of variables in $F_j$.
If some variables of $F_i$ are in the most dominant layer, we can still remove module $F_i$ from the control search using an edge control, as shown in the following theorem. 
\begin{thm}\label{thm:edge_control}
    Suppose $F = F_1\rtimes_{P_1} F_2\rtimes_{P_2}\cdots\rtimes_{P_{n-1}} F_m$ is a decomposable network. If for some $i<j$,
    \begin{enumerate}[(i)]
    \item only one node $x \in F_j$ with update function $f_x$ is regulated by nodes in $F_i$, and \item $f_x$ is a canalizing function with some variables from $F_i$ in its most \new{dominant} layer, and
    \item the phenotype of interest only depends on variables in $F_j$ as well as modules $F_k$, for which any chain containing $F_i$ and $F_k$ also contains $F_j.$ \new[(see Figure~\ref{fig:thm412explained} for an example)]{}
    \end{enumerate}
    then the module $F_i$ can be excluded from the control search by applying an edge control to any input in the most dominant layer of $f_x$. 
\end{thm}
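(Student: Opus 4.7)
The plan is to mirror the proof strategy of Theorem~\ref{thm:ctrl+canal}, replacing its node-control decoupling via Lemma~\ref{lem:ctrl+canal} with an edge-control that annihilates the outermost layer of $f_x$. Given any input $z$ of $f_x$ lying in its most dominant layer $M_1 = \prod_{j}(x_{i_j}+a_{i_j})$, let $a_z$ denote the canalizing input value for $z$, so that $z$ contributes the factor $(z+a_z)$ to $M_1$. Following Definition~\ref{def:edge_del}, I would apply the edge control $z \xrightarrow[]{a_z} x$, meaning that every evaluation of $f_x$ substitutes $a_z$ in place of $z$.

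By the extended monomial decomposition (Theorem~\ref{thm:he}),
\[
f_x = M_1\bigl(M_2(\cdots(M_r p_C + 1)\cdots)+1\bigr) + q,
\]
and the substitution $z = a_z$ forces the factor $(z+a_z)$ to $0$, so $M_1=0$ and hence $f_x \equiv q$ under the active control. Thus the update rule for $x$ becomes constant, and in particular independent of every variable of $F_i$. By condition~(i), $x$ is the unique node of $F_j$ regulated by nodes in $F_i$, so the controlled network no longer contains any directed edge from $F_i$ into $F_j$.

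From here, the argument concludes exactly as in Theorem~\ref{thm:ctrl+canal}: condition~(iii) ensures that every directed path from $F_i$ to a phenotype-defining variable traverses $F_j$, so having severed the only communication channel from $F_i$ to $F_j$, the internal dynamics of $F_i$ have no influence on the phenotype of interest, and $F_i$ may therefore be excluded from the control search. The main subtlety is in justifying the layer collapse: one must invoke Theorem~\ref{thm:he} to guarantee that the most dominant layer really has the product form $\prod_j(x_{i_j}+a_{i_j})$ so that a single vanishing factor zeros out the entire layer, and verify that the exceptional cases of that theorem do not interfere with the reduction $f_x \equiv q$.
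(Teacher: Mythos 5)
Your proposal is correct and takes essentially the same route as the paper: the paper's proof also picks a variable in the most dominant layer of $f_x$ (there, specifically one from $F_i$, which exists by condition (ii)), fixes the corresponding edge into $x$ at its canalizing value, and concludes that $F_i$ is decoupled from $F_j$ and hence removable from the control search by condition (iii). Your version merely spells out the algebra the paper leaves implicit---that zeroing one factor of $M_1$ collapses $f_x$ to the constant $q$ in the representation of Theorem~\ref{thm:he}---which is a harmless (and accurate) elaboration.
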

\begin{proof}
Let $y \in F_i$ such that \new{$y$ appears as a variable in $f_x$}, and that $y$ is located in the most dominant layer $f_x$. 
Then, 
setting $y$ to its canalizing value results in decoupling the subnetworks $F_i$ and $F_j$. Thus, $F_i$ will no longer have any effect on $F_j$ and thus it can be removed from the control search.   
\end{proof}

\begin{remark}
    The method can be extended to the case when $F_i$ and $F_j$ are connected via multiple nodes (that is, condition (i) in Theorem~\ref{thm:ctrl+canal} and Theorem~\ref{thm:edge_control} can be relaxed. In that case decoupling is achieved through the same procedure presented above, applied to each node in $F_j$ with regulators from $F_i$. 

    We further note that condition (ii) in the theorems above is generally very restrictive as only a small proportion of Boolean functions in $n>3$ variables \new{are} canalizing, \new{let alone} nested canalizing. However, as shown in~\cite{kadelka2024meta}, most biological Boolean network models are almost entirely governed by nested canalizing functions.
\end{remark}

\section{\new{An application: Control of a Blood Cancer Boolean Model}}

To showcase these methods, we will now decompose a published Boolean network model into its modules, and then identify the minimal set of controls for the entire network by exploiting the canalizing structure of the regulatory functions within the modules. The identified set of controls will force the entire system into a desired attractor.

\begin{figure}
    \centering
    \includegraphics[width=\textwidth]{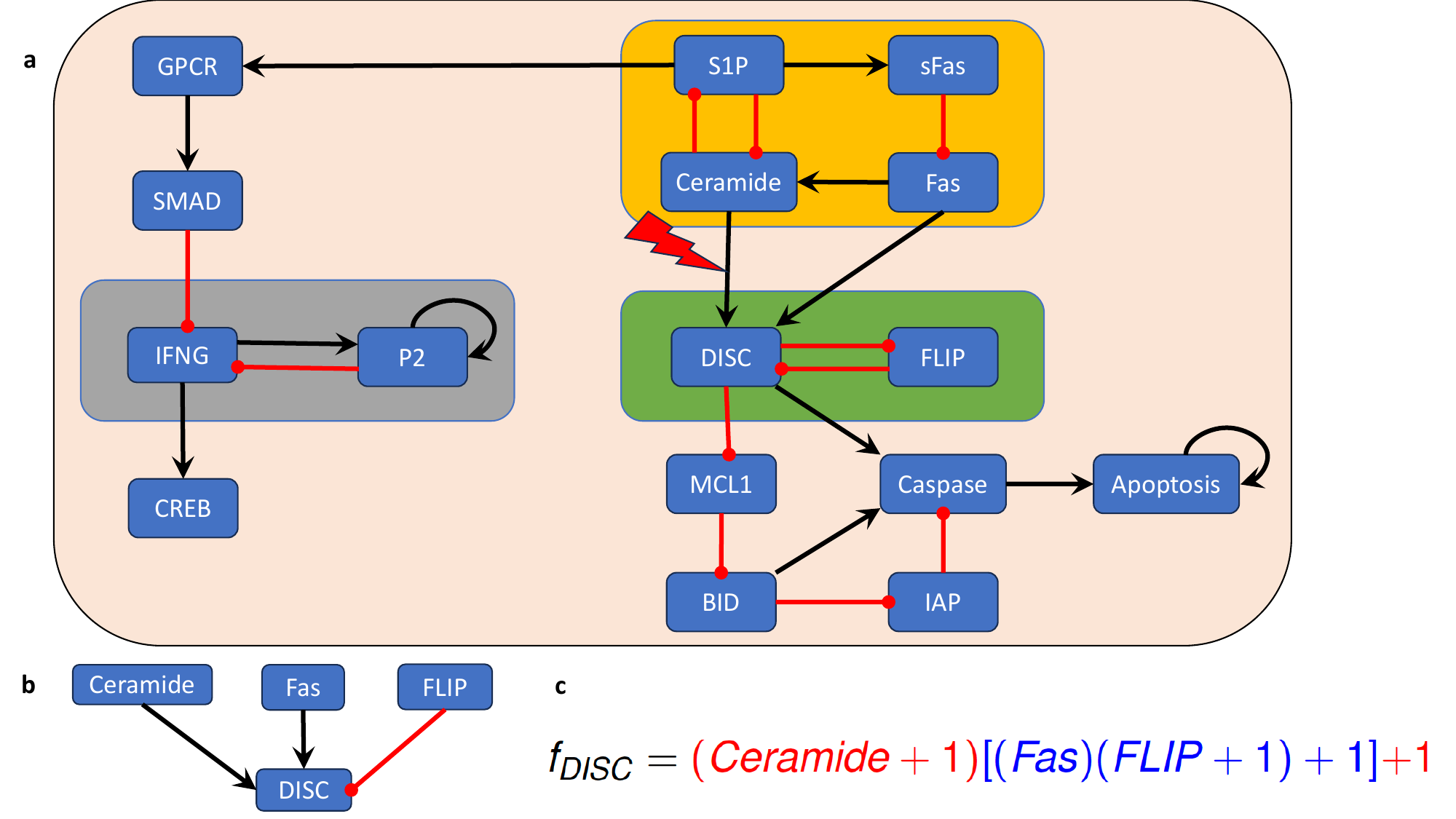}
    \caption{(a) Wiring diagram of the T-LGL model, published in \cite{Saadatpour:2011aa}, which describes the mechanisms that regulate apoptosis. The non-trivial modules \new{(i.e., the modules containing multiple nodes)} are highlighted by amber, green, and gray boxes.
    (b) The regulatory inputs of the node DISC. (c) Writing the regulatory function corresponding to node DISC in its \new{extended} monomial form (Theorem~\ref{thm:he}) reveals its canalizing structure.}
    \label{fig:tlgl}
\end{figure}    
    
    We consider a Boolean network model for the blood cancer large granular lymphocyte (T-LGL) leukemia, which was published in \cite{Saadatpour:2011aa}. T-LGL leukemia is a clonal hematological disorder characterized by persistent increases of large granular lymphocytes in the absence of reactive cause~\cite{rashid2014t}. The wiring diagram of this model is depicted in Figure~\ref{fig:tlgl}a. This network has 16 nodes and three non-trivial modules (\new{i.e., modules containing more than one node,} highlighted by the amber, green, and gray boxes in Figure~\ref{fig:tlgl}a).
   The control objective here is to identify control targets that lead the system to programmed cell death. In other words, we aim to direct the system into an attractor that has the marker apoptosis ON.   

Since the model has three non-trivial modules, the approach described in Section~\ref{sec:control} would require us to identify control targets for three modules. However, an exploitation of the canalizing structure and common sense reveals that we do not need to control every module to ensure apoptosis, the desired control objective. First, irrespective of canalization, the module highlighted in gray in Figure~\ref{fig:tlgl}a does not affect the phenotype apoptosis. Therefore, we can focus on the modules ``upstream" of apoptosis (i.e., the amber and green modules in Figure~\ref{fig:tlgl}a).

In this case, we will apply Theorem~\ref{thm:edge_control} to identify control targets for this model. 
We note that the edges from the upstream module (amber box in Figure~\ref{fig:tlgl}a) to the downstream module (green box in Figure~\ref{fig:tlgl}a) all end in the node DISC. Therefore, we will investigate the canalizing properties of the regulatory function of DISC (see Figure~\ref{fig:tlgl}b),  
   \[
   f_{DISC}=Ceramide\lor(Fas\land \overline{FLIP}). 
    \]
   Using the approach described in~\cite{dimitrova2022revealing}, we find that $f_{DISC}$ has two canalizing layers, $L_1=\{Ceramide\}$ and $L_2=\{Fas, FLIP\}$, and \new{extended monomial form (Theorem~\ref{thm:he}) given in Figure~\ref{fig:tlgl}c).}
 
We note that the only variable in the most dominant canalizing layer, $Ceramide$, is in the upstream module. Thus, we can decouple the modules via an edge control on the connection between the upstream and downstream modules. That is, the constant expression of the edge from Ceramide to DISC will decouple the two modules and will lead to constant expression of DISC. We can check that this control is effective at stabilizing the system in the desired attractor and the control set obviously has minimal size.

In summary, in this example we used an edge control to decouple the upstream and downstream modules and then identified a control target in the downstream module which contains the markers of the phenotype of interest.

\section{Conclusion}
Model-based control is a mainstay of industrial engineering, and there is a well-developed mathematical theory of optimal control that can be applied to models consisting of systems of ordinary differential equations. While this model type is also commonly  used in biology, for instance in biochemical network modeling or epidemiology and ecology, there are many biological systems that are more suitably modeled in other ways. Boolean network models provide a way to encode regulatory rules in networks that can be used to capture qualitative properties of biological networks, when it is unfeasible or unnecessary to determine kinetic information. While they are intuitive to build, they have the drawback that there is very little mathematical theory available that can be used for model analysis, beyond simulation approaches. And for large networks, simulation quickly becomes ineffective. 

The results in this paper, building on those in \cite{kadelka2023modularity}, can be considered as a contribution to a mathematical control theory for Boolean networks, incorporating key features of biological networks. There are many open problems that remain, and we hope that this work will inspire additional developments. 

Our concrete contributions here are as follows. The modularization method makes the control search far more efficient and allows us to combine controls at the module level obtained with different control methods. For example, methods based on computational algebra~\cite{murrugarra2016identification,sordo2020control} can identify controllers that can create new (desired) steady states, which other methods cannot. Feedback vertex set~\cite{mochizuki2013dynamics,zanudo2017structure} is a structure-based method that identifies a subset of nodes whose removal makes the graph acyclic. Stable motifs~\cite{zanudo2015cell} are based on identifying strongly connected subgraphs in the extended graph representation of the Boolean network. Other control methods include~\cite{choo2018phenotype,LauraCF2022,borriello2021basis}. We can use any combination of these methods to identify the controls in each module.

\section{Acknowledgments}
Author Matthew Wheeler was supported by The American Association of Immunologists through an Intersect Fellowship for Computational Scientists and Immunologists. This work was further supported by the Simons foundation [grant numbers 712537 (to C.K.), 850896 (to D.M.), 516088 (to A.V.)]; the American Mathematical Society and the Simons Foundation [Enhancement Grant for PUI faculty (to A.V.)]; the National Institute of Health [grant number 1 R01 HL169974-01 (to R.L.)];  and the Defense Advanced Research Projects Agency [grant number HR00112220038 (to R.L.)]. The authors also thank the Banff International Research Station for support through its Focused Research Group program during the week of May 29, 2022 (22frg001), which was of great help in framing initial ideas of this paper.

\bibliographystyle{unsrt}
\bibliography{control_ref}

\end{document}